\title{\Huge
{Finite-Length Analysis of\\Polar Secrecy Codes for Wiretap Channels}}
\author{\large%
Hessam Mahdavifar and Fariba Abbasi
\thanks{H.\ Mahdavifar is with the Department of Electrical and Computer Engineering, Northeastern University, Boston, MA 02115  (email: h.mahdavifar@northeastern.edu.) F.\ Abbasi is with the Department of Electrical Engineering and Computer Science, University of Michigan, Ann Arbor, MI 48104 (email: fabbasia@umich.edu.)}
\thanks{This work was supported in part by the Center for Ubiquitous Connectivity (CUbiC), sponsored
by Semiconductor Research Corporation (SRC) and Defense Advanced Research Projects
Agency (DARPA) under the JUMP 2.0 program.}
}
\newcommand{\cA}{{\cal A}} 
\newcommand{\cB}{{\cal B}}
\newcommand{\cC}{{\cal C}} 
\newcommand{\cE}{{\cal E}} 
\newcommand{\cG}{{\cal G}}
\newcommand{\cR}{{\cal R}}
\newcommand{\cU}{{\cal U}}
\DeclareMathAlphabet{\mathbfsl}{OT1}{ppl}{b}{it} 
\newcommand{\bU}{\mathbfsl{U}} 
\newcommand{\bV}{\mathbfsl{V}}
\newcommand{\bZ}{\mathbfsl{Z}}
\newcommand{\bu}{\mathbfsl{u}} 
\newcommand{\bv}{\mathbfsl{v}}
\newcommand{\bx}{\mathbfsl{x}}
\newcommand{\eee}{\mathbfsl{e}} 
\newcommand{\muo}{\overline{\mu}} 
\newcommand{\muu}{\underline{\mu}} 
\newcommand{\be}[1]{\begin{equation}\label{#1}}
\newcommand{\ee}{\end{equation}} 
\newcommand{\eq}[1]{(\ref{#1})}
\renewcommand{\leq}{\leqslant}
\renewcommand{\geq}{\geqslant}
\renewcommand{\Bbb}{\mathbb}
\newcommand{\N}{{\Bbb N}}
\newcommand{\R}{{\Bbb R}} 
\newcommand{\F}{{\Bbb F}}
\newcommand{\Tref}[1]{Theo\-rem\,\ref{#1}}
\newcommand{\Cref}[1]{Co\-ro\-lla\-ry\,\ref{#1}}
\newcommand{\deff}{\mbox{$\stackrel{\rm def}{=}$}}
\newcommand{\Gm}{G^{\otimes m}}
\newcommand{\shalf}{\mbox{\raisebox{.8mm}{\footnotesize $\scriptstyle 1$}
\footnotesize$\!\!\! / \!\!\!$ \raisebox{-.8mm}{\footnotesize
$\scriptstyle 2$}}}
\theoremstyle{plain} 
\newtheorem{thm}{Theorem\hspace{-1pt}} 
\newenvironment{theorem}
{\begin{thm}\hspace*{-1ex}{\bf.}}{\end{thm}}
\newtheorem{lem}[thm]{Lemma\hspace{-.75pt}}
\newenvironment{lemma}{\begin{lem}\hspace*{-1ex}{\bf.}}{\end{lem}}
\newtheorem{prop}[thm]{Proposition$\!$}
\newenvironment{proposition}{\begin{prop}\hspace*{-1ex}{\bf.}}{\end{prop}}
\newtheorem{cor}[thm]{Corollary$\!$}
\newenvironment{corollary}{\begin{cor}\hspace*{-1ex}{\bf.}}{\end{cor}}
\newtheorem{defn}{Definition$\!$}
\newenvironment{definition}{\begin{defn}\hspace*{-1ex}{\bf.}}{\end{defn}}
\begin{document}

\vspace{10mm}
\maketitle

\begin{abstract}
In a classical wiretap channel setting, Alice communicates with Bob through a main communication channel, while her transmission also reaches an eavesdropper Eve through a wiretap channel. In this paper, we consider a general class of polar secrecy codes for wiretap channels and study their finite-length performance. In particular, bounds on the normalized mutual information security (MIS) leakage, a fundamental measure of secrecy in information-theoretic security frameworks, are presented for polar secrecy codes. The bounds are utilized to characterize the finite-length scaling behavior of polar secrecy codes, where scaling here refers to the non-asymptotic behavior of both the gap to the secrecy capacity as well as the MIS leakage. Furthermore, the bounds are shown to facilitate characterizing numerical bounds on the secrecy guarantees of polar secrecy codes in finite block lengths of practical relevance, where directly calculating the MIS leakage is in general infeasible. 

\end{abstract}

\section{Introduction}
\label{sec:intro}

The wiretap channel model was first introduced by Wyner in 1975 \cite{Wyner}. Wyner's seminal work paved the way for an entire area of research encompassing hundreds of papers by now, often referred to as physical layer security. In Wyner's setting, demonstrated in Figure\,\ref{fig:WynerSetting}, a transmitter (Alice) communicates with a legitimate receiver (Bob) through a main communication channel $W^*$, while her transmission also reaches an eavesdropper Eve through a wiretap channel $W$. The goal is to design a coding scheme that makes it possible for Alice to communicate both reliably and securely. Reliability is measured in terms of Bob's probability of error in estimating the message $\bU$, denoted by $\hat{\bU}$, same as in a classical channel coding problem. The security condition is expressed in terms of the mutual information between the message $\bU$ and Eve's observation $\bZ$, often referred to as the mutual information security (MIS) leakage. More specifically, the \textit{weak} secrecy condition, introduced by Wyner \cite{Wyner}, requires that $\lim_{k \rightarrow \infty} I(\bU;\bZ)/k = 0$, where $k$ is the length of the message $\bU$. Wyner characterized the fundamental limit on the communication rate in this regime, referred to as the \textit{secrecy capacity}. Later, Maurer extended the security condition to \text{strong} secrecy, that is, to require that $\lim_{k \rightarrow \infty} I(\bU;\bZ) = 0$ \cite{Maurer}. Remarkably, Maurer showed that the secrecy capacity remains the same under strong secrecy. 


\begin{figure}[t]
\centering
\includegraphics[width=0.9\linewidth]{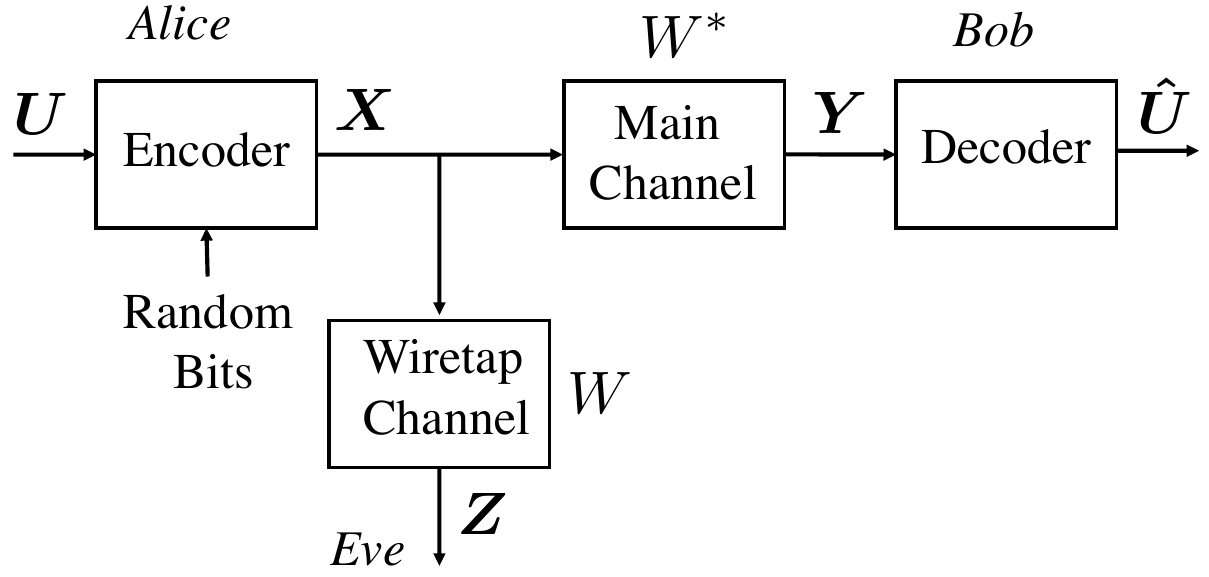}
\caption{Wyner's wiretap channel setting. }

\label{fig:WynerSetting}
\end{figure}

Polar secrecy codes, built upon Ar{\i}kan's polar codes \cite{Arikan}, were first introduced in \cite{MahdavifarSecrecy} to achieve the secrecy capacity of wiretap channels when $W$ and $W^*$ are both binary-input memoryless symmetric (BMS) channels and $W$ is \textit{degraded with respect to} $W^*$. The latter means that $W$ can be written as a cascade of $W^*$ with another channel. Under this assumption the secrecy capacity is $C_s = C(W^*) - C(W)$, where $C(.)$ is the channel capacity \cite{Wyner}. Note that the results in \cite{MahdavifarSecrecy} are established under the weak secrecy condition. A scheme together with empirical observations are provided in \cite{MahdavifarSecrecy} for strong secrecy, however, it remains an open problem whether \textit{plain} polar codes are strong enough to achieve the secrecy capacity with strong secrecy. 

There have been various extensions of the scheme in \cite{MahdavifarSecrecy}, e.g., to establish different variations of secrecy \cite{bloch2013strong,chou2015polar,gunlu2020randomized}. It has also found applications in quantum communication \cite{wilde2013polar}, and quantum key distribution \cite{jouguet2012high,fang2022improved}. Also, an extension of this scheme together with a \textit{chaining} constructions is used in \cite{Sasoglu} to establish secrecy capacity-achieving property, under the same assumptions as in \cite{MahdavifarSecrecy}, with strong secrecy. Due to the nature of the chaining construction in \cite{Sasoglu}, the results are only asymptotic and the techniques are not \textit{strong} enough to characterize the \textit{scaling} behavior of the chained polar secrecy codes, i.e., how fast the gap-to-secrecy capacity and the MIS leakage of the scheme approach zero. 

In this paper, we go back to the basic scheme of \cite{MahdavifarSecrecy}, which only involves \textit{plain} polar coding, i.e., no concatenation or chaining constructions are applied on top of the scheme. We provide new bounds on the MIS leakage of polar secrecy codes. We show how these bounds, together with leveraging existing results on finite-length scaling of polar codes by Hassani \textit{et al.} \cite{hassani}, lead to a characterization of finite-length scaling of polar secrecy codes. In other words, we characterize how the gap to secrecy capacity of polar secrecy codes as well as their MIS leakage scale with the block length as the rate approaches the secrecy capacity. The bounds presented in this paper not only enables us to characterize the finite-length scaling of polar secrecy codes, but also to numerically bound the actual MIS leakage in practical settings. 

The rest of this paper is organized as follows. In Section\,\ref{sec:two} some preliminaries are presented for polar codes and secrecy codes. In Section\,\ref{sec:three}, we present new bounds on the MIS leakage of polar secrecy codes. Finite-length scaling behavior of polar secrecy codes are characterized in Section\,\ref{sec:four}, and some numerical results are presented in Section\,\ref{sec:five}.

\section{Preliminaries}
\label{sec:two}

\subsection{Notation Convention}

First, let us briefly introduce the notations. For $n \in \N$, let $[n]:=\{1,2,\dots,n\}$. The parameter $n$ is reserved for the code block length throughout the paper. Vectors are denoted by bold lowercase letters. For a vector $\bv = (v_1,v_2,\dots,v_n)$, and $\cA \in [n]$, let $\bv_{\cA}$ denote the sub-vector of $\bv$ with entries $v_i$'s for $i \in \cA$. Following the convention, uppercase letters represent random variables and lowercase letters represent the instances of random variables, e.g., $\bU = (U_1,U_2,\dots,U_n)$ represents a vector of random variables $U_i$'s and $\bu$ represents the instances $u_i$'s of random variables $U_i$'s. 

Let $G$ denote the $2 \times 2$ matrix $\left[ 
\begin{array}{@{\hspace{0.50ex}}c@{\hspace{1.25ex}}c@{\hspace{0.50ex}}}
1 & 0\\
1 & 1\\ 
\end{array}\right]$ throughout this paper. Let $n = 2^m$, for some positive integer $m$. Then $G^{\otimes m}$ denotes the $\text{$m$-th Kronecker power of $G$}$, which is an $n \times n$ matrix. 


\subsection{Polar Codes}

Polar codes belong to a general family of codes whose generator matrices, for the block length $n=2^m$, are sub-matrices of $\Gm$. Any code $\cC$ belonging to this family can be essentially specified by a set of \textit{information} bits $\cA \subseteq [n]$. In other words, $\cA$ denotes the set of indices of the rows of $\Gm$ to be included in the generator matrix of $\cC$. Alternatively, $\cC$ can be specified by $\cB = \cA^{c}$, where, in the context of polar coding, $\cB$ is referred to as the set of frozen indices. To unify the terminology, we simply refer to any code in this family as a polar code. More specifically, we refer to $\cC$ as an $(n,k)$ polar code associated with $\cA$, where $k = |\cA|$. 

The encoder for an $(n,k)$ binary polar code $\cC$ associated with $\cA$ is as follows. The input to the encoder is $\bu \in \{0,1\}^k$. The encoder first forms the vector $\bv \in \{0, 1\}^n$, by setting $\bv_\mathcal{A} = \bu$ and $\bv_\mathcal{B} = \mathbf{0}$. Then, the encoder outputs $\bx = \bv \Gm$. 

Let us recall the terminology of \textit{good} bit-channels from the polar coding literature. In general, the polarization transform, introduced by Ar{\i}kan \cite{Arikan}, with $n$ inputs and $n$ outputs over a BMS channel $W$ is split into $n$ synthesized bit-channels, denoted by $W_1,W_2,\dots,W_n$. Roghly speaking, the $i$-th bit-channel $W_i$ is the channel that the $i$-th input bit observes in the polarization transform. The \textit{quality} of these bit-channels, i.e., a metric to measure how \textit{good} they are, is defined through certain underlying parameters of the bit-channels, e.g., the Bhattacharyya parameter. More specifically, the set of \textit{good} bit-channels in the polarization transform, with an underlying threshold $\epsilon$, is denoted by $\cG_n(W,\epsilon)$ and is defined as
\be{good-def}
\cG_n(W,\epsilon) \,\deff\, \{i: Z(W_i) < \epsilon\},
\ee
where $Z(.)$ is the Bhattacharyya parameter of the channel. 

\subsection{Polar Secrecy Codes}

The general form of the scheme in \cite{MahdavifarSecrecy}, which we refer to as a polar secrecy code (or the plain scheme, as referred to earlier), is specified by splitting $[n]$, where $n$ is the block length, into three subsets $\cA$, $\cB$, and $\cR$ of indices for information bits, frozen bits, and random bits, respectively. Let $|\cA| = k$, and $|\cR| = r$. Consequently, $|\cB| = n - k - r$. In particular, a polar secrecy code is defined through its encoder, formally defined as follows:
\begin{definition}
\label{secrecy_code_def}
The encoder for the polar secrecy code of length $n$ associated with $(\cA,\cR)$ is a function $\cE: \{0, 1\}^k \times \{0, 1\}^r \rightarrow \{0, 1\}^n$. It accepts as input a message $\bu \in \{0, 1\}^k$ and a vector $\eee \in \{0, 1\}^r$. It is assumed that the entries of $\eee$ are selected independently and uniformly at random by the encoder. The encoder first forms the vector $\bv \in \{0, 1\}^n$, by setting
$\bv_{\cA} = \bu$, $\bv_{\cR} = \eee$, and $\bv_{\cB} = \mathbf{0}$. The encoder then outputs $\cE(\bu,\eee) := \bv \Gm$. The secrecy rate of this scheme is $R_s := \frac{k}{n}$. 
\end{definition}


An upper bound on the MIS leakage of polar secrecy codes is presented in \cite[Lemma 15]{MahdavifarSecrecy} which we recall here under a slight variation:

\begin{proposition}
\label{prop-UB}
The following upper bound holds on the MIS leakage $I(\bU; \bZ)$ of a polar secrecy code associated with $(\cA,\cR)$, as defined in Definition\,1:
\begin{equation}
I(\bU; \bZ) \leq \min (\sum_{i \in {\mathcal{R}}^c} C_{i}, k), 
\end{equation}
where $C_i$ is the capacity of the $i$-th bit-channel $W_i$ in the polarization transform of $W$. 
\end{proposition}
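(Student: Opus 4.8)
The plan is to prove the two bounds $I(\bU;\bZ)\le k$ and $I(\bU;\bZ)\le\sum_{i\in\cR^c}C_i$ separately and combine them (since $\cA$, $\cB$, $\cR$ partition $[n]$, we have $\cR^c=\cA\cup\cB$). The first is immediate from $I(\bU;\bZ)\le H(\bU)\le k$. For the second I would reduce to the standard i.i.d.-uniform polarization model and then apply the chain rule.

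The reduction step handles the fact that the frozen coordinates of a polar secrecy code satisfy $\bV_\cB=\mathbf{0}$. Writing $I(\bU;\bZ)=I(\bV_\cA;\bZ)$ and using that $W$ is a BMS channel: the contribution $\sum_{j\in\cB}V_j(\Gm)_j$ of the frozen coordinates enters $\bX=\bV\Gm$ only as a fixed offset, and adding a fixed vector at the input of a symmetric channel merely relabels its outputs, leaving all mutual informations unchanged. Hence $I(\bV_\cA;\bZ)$ is unaffected if $\bV_\cB$ is replaced by any other fixed value --- and therefore also by a uniformly random vector that is \emph{conditioned upon}. Thus $I(\bV_\cA;\bZ)$ equals $I(\bV_\cA;\bZ\mid\bV_\cB)$ evaluated in the model where all of $\bV$ is i.i.d.\ uniform, the model in which the bit-channels $W_i$ (hence the $C_i$) are defined.

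In that model I would then bound $I(\bV_\cA;\bZ\mid\bV_\cB)\le I(\bV_{\cA\cup\cB};\bZ)$, which holds because $I(\bV_{\cA\cup\cB};\bZ)=I(\bV_\cB;\bZ)+I(\bV_\cA;\bZ\mid\bV_\cB)$. Ordering $\cA\cup\cB=\{j_1<j_2<\cdots\}$, the chain rule gives $I(\bV_{\cA\cup\cB};\bZ)=\sum_t I(V_{j_t};\bZ\mid V_{j_1},\dots,V_{j_{t-1}})$, and since in this model $V_{j_t}$ is independent of $(V_1,\dots,V_{j_t-1})$, conditioning each term additionally on the missing earlier coordinates --- exactly those indexed by $\cR\cap[j_t-1]$ --- can only increase it: $I(V_{j_t};\bZ\mid V_{j_1},\dots,V_{j_{t-1}})\le I(V_{j_t};\bZ\mid V_1,\dots,V_{j_t-1})=I(W_{j_t})=C_{j_t}$. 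Summing, $I(\bV_{\cA\cup\cB};\bZ)\le\sum_{i\in\cA\cup\cB}C_i=\sum_{i\in\cR^c}C_i$; together with $I(\bU;\bZ)\le k$ this yields $I(\bU;\bZ)\le\min\bigl(\sum_{i\in\cR^c}C_i,\,k\bigr)$.

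I expect the subtle points to be in the setup rather than in the estimates. In the symmetry reduction one must pass to a uniform $\bV_\cB$ that is \emph{conditioned upon}, not to an unconditioned uniform one; and one must run the chain rule over $\cA\cup\cB$ (not over $\cA$ alone) so that only coordinates preceding $j_t$ get filled into the conditioning and $C_{j_t}$ is recovered exactly --- conditioning on coordinates after $j_t$ would produce terms larger than $C_{j_t}$ and break the bound. Beyond these, the argument needs only the chain rule, monotonicity of conditional mutual information under conditioning on independent variables, and the definition of the polar bit-channels; the sole property of $W$ used is that it is BMS.
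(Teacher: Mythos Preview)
Your argument is correct and complete, but note that the paper does not actually prove \Pref{prop-UB}: it is quoted from \cite[Lemma~15]{MahdavifarSecrecy} ``under a slight variation'' and left unproved here. So there is no in-paper proof to compare against. That said, your approach is fully consistent with the paper's methodology: the symmetry reduction you use to pass from $I(\bV_\cA;\bZ)$ with $\bV_\cB=\mathbf{0}$ to $I(\bV_\cA;\bZ\mid\bV_\cB)$ with uniform $\bV_\cB$ is exactly the device the paper invokes in its proof of the companion lower bound (\Pref{prop-LB}, equation \eqref{Lemma10}). The remainder of your proof --- chain rule over $\cA\cup\cB$, then adding the missing $\cR$-coordinates into the conditioning via $I(X;Y\mid S,W)\ge I(X;Y\mid S)$ when $X\perp W\mid S$ --- is standard and clean, and your closing paragraph correctly flags the two places where a careless execution would fail.

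One small caveat worth recording: your sentence ``evaluated in the model where all of $\bV$ is i.i.d.\ uniform'' tacitly assumes $\bU$ is uniform. The symmetry reduction you perform only replaces $\bV_\cB$ by a uniform (conditioned-upon) vector; it does not alter the distribution of $\bV_\cA=\bU$. The paper remarks that \Pref{prop-UB} ``holds regardless of the distribution on $\bU$,'' whereas your identification $I(V_{j_t};\bZ\mid V_1,\dots,V_{j_t-1})=C_{j_t}$ needs the later coordinates $V_{j_t+1},\dots,V_n$ to be i.i.d.\ uniform so that the bit-channel $W_{j_t}$ is exactly recovered. Under uniform $\bU$ (the standing assumption elsewhere in the paper) your proof is airtight; to cover arbitrary $\bU$ one would add a one-line observation that the induced channel from $\bV_\cA$ to $\bZ$ is itself symmetric (same BMS argument, applied to the $\cA$-rows of $\Gm$), so uniform $\bU$ maximizes $I(\bU;\bZ)$ and the bound for uniform $\bU$ dominates.
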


\section{Bounds on the MIS Leakage\\ of Polar Secrecy Codes}
\label{sec:three}

In this section, we first present a new lower bound on the MIS leakage $I(\bU; \bZ)$ for polar secrecy codes in wiretap settings. 

Note that while the upper bound in Proposition \ref{prop-UB} holds regardless of the distribution on $\bU$, our new lower bound holds when the input message $\bU$ has a uniform distribution. Note that without any assumption on the distribution of $\bU$, $I(\bU; \bZ)$ can be as small as zero. However, in practice, the message bits often come from a uniform distribution which is a common assumption in designing digital communication systems. 

Before proceeding to demonstrate and prove our lower bound on the MIS leakage, we present the following lemma which will be used in the proof of the lower bound, and could be of independent interest. The proof of lemma is omitted due to space constraints. 

\begin{lemma}
\label{lem-LB}
Consider the polarization transform of length $n$ over the BMS channel $W$ with input $\bV$, and output $\bZ$. Then for a subset of indices $\mathcal{D} \subseteq \{1,2,\dots,n\}$, we have
\begin{equation}
\label{lem-LB1}
I(\bV_\mathcal{D};\bZ|\bV_{\mathcal{D}^c}) \geq \sum_{i \in \mathcal{D}} C_i,
\end{equation}
where $C_i$ is the capacity of the $i$-th bit-channel $W_i$. 
\end{lemma}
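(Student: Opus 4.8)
\emph{Proof plan.} The plan is to expand $I(\bV_\mathcal{D};\bZ\mid\bV_{\mathcal{D}^c})$ by the chain rule over the coordinates of $\mathcal{D}$ taken \emph{in increasing order}, and then to bound each resulting term below by the corresponding bit-channel capacity. Here $\bV$ is understood to be the uniform i.i.d.\ input of the polarization transform (the setting in which the bit-channels, and hence the $C_i$, are defined). Writing $V_1^{i-1} := (V_1,\dots,V_{i-1})$, I would first record that, since $W$ is a BMS channel the synthesized channel $W_i$ is symmetric, so its capacity $C_i$ is attained by a uniform input and, evaluated at the uniform i.i.d.\ input of the transform, equals $I(V_i;\bZ,V_1^{i-1}) = I(V_i;\bZ\mid V_1^{i-1})$, the last equality because $V_i$ is independent of $V_1^{i-1}$.

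Write $\mathcal{D}=\{d_1<d_2<\cdots<d_t\}$ with $t=|\mathcal{D}|$. By the chain rule,
\[
I(\bV_\mathcal{D};\bZ\mid\bV_{\mathcal{D}^c})=\sum_{j=1}^{t} I\bigl(V_{d_j};\bZ \;\big|\; \bV_{\mathcal{D}^c},V_{d_1},\dots,V_{d_{j-1}}\bigr).
\]
The conditioning in the $j$-th term contains exactly all inputs $V_1,\dots,V_{d_j-1}$ — the indices below $d_j$ that lie in $\mathcal{D}$ are precisely $d_1,\dots,d_{j-1}$, and the remaining ones below $d_j$ lie in $\mathcal{D}^c$ — together with the ``future'' inputs $\bV_{S_j}$, where $S_j:=\mathcal{D}^c\cap\{d_j+1,\dots,n\}$. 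Hence it suffices to show, for each $j$,
\[
I\bigl(V_{d_j};\bZ\mid V_1^{d_j-1},\bV_{S_j}\bigr)\;\ge\; I\bigl(V_{d_j};\bZ\mid V_1^{d_j-1}\bigr)=C_{d_j},
\]
since summing over $j$ then yields $\sum_{j}C_{d_j}=\sum_{i\in\mathcal{D}}C_i$.

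For this displayed inequality I would invoke the interaction-information identity $I(A;B\mid C,D)-I(A;B\mid C)=I(A;D\mid B,C)-I(A;D\mid C)$ with $A=V_{d_j}$, $B=\bZ$, $C=V_1^{d_j-1}$, and $D=\bV_{S_j}$. The term $I(V_{d_j};\bV_{S_j}\mid V_1^{d_j-1})$ vanishes because the coordinates of $\bV$ are mutually independent, while $I(V_{d_j};\bV_{S_j}\mid\bZ,V_1^{d_j-1})\ge 0$ trivially, and the identity delivers the claim. This last comparison is the crux and the only step needing care: conditioning on the extra data $\bV_{S_j}$ could in principle raise or lower a mutual information, and it is precisely the independence of the polar inputs — combined with the fact that $\bV_{S_j}$ consists only of coordinates \emph{strictly after} $d_j$ — that forces the change to be nonnegative.

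An alternative route, which I would mention, is to prove instead the complementary bound $I(\bV_{\mathcal{S}};\bZ)\le\sum_{i\in\mathcal{S}}C_i$ for every $\mathcal{S}\subseteq\{1,\dots,n\}$ by an analogous increasing-order chain-rule argument (this is the one-shot counterpart of Proposition\,\ref{prop-UB}); applying it with $\mathcal{S}=\mathcal{D}^c$ and combining with the ``conservation of capacity'' identity $I(\bV;\bZ)=\sum_{i=1}^{n}C_i$ (immediate from the chain rule and input independence) gives $I(\bV_\mathcal{D};\bZ\mid\bV_{\mathcal{D}^c})=I(\bV;\bZ)-I(\bV_{\mathcal{D}^c};\bZ)\ge\sum_{i\in\mathcal{D}}C_i$.
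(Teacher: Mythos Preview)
Your argument is correct. The chain-rule expansion over $\mathcal{D}$ in increasing order, together with the interaction-information identity and the independence of the $V_i$'s, cleanly delivers the termwise inequality $I(V_{d_j};\bZ\mid V_1^{d_j-1},\bV_{S_j})\ge I(V_{d_j};\bZ\mid V_1^{d_j-1})=C_{d_j}$; your alternative route via the complementary upper bound and conservation of capacity is also valid and is essentially the mechanism underlying Proposition~\ref{prop-UB}. The paper omits its own proof of this lemma due to space constraints, so there is no reference argument to compare against, but your proposal would serve perfectly well as the missing proof.
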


\begin{proposition}
\label{prop-LB}
The following lower bound holds on the MIS leakage $I(\bU; \bZ)$ of a polar secrecy code associated with $(\cA,\cR)$, as defined in Definition\,1:
\begin{equation}
I(\bU; \bZ) \geq \max(\sum_{i \in {\mathcal{A}}} C_{i}  + \sum_{j \in \mathcal{R}} C_j -r , 0),\label{LowerboundMIS}
\end{equation}
where $r = |\mathcal{R}|$, and $C_i$ is the capacity of the $i$-th bit-channel $W_i$.
\end{proposition}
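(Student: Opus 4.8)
The plan is to lower-bound $I(\bU;\bZ)$ by relating it to the conditional mutual information quantities that Lemma~\ref{lem-LB} controls. The starting point is the chain rule applied to the set $\cA \cup \cR$ under the polarization transform with input $\bV$ and output $\bZ$. Since $\bV_{\cB} = \zero$ is deterministic, and since $\bV_{\cA} = \bU$ is uniform and $\bV_{\cR} = \eee$ is uniform and independent of $\bU$, the pair $(\bU,\eee)$ is a uniform vector of $k+r$ bits. I would first write
\begin{equation}
I(\bV_{\cA \cup \cR}; \bZ \mid \bV_{\cB}) \;=\; I(\bU, \eee; \bZ) \;=\; I(\bU;\bZ) + I(\eee;\bZ \mid \bU),
\end{equation}
using independence of $\bU$ and $\eee$ so that $I(\bU;\eee)=0$ and the joint distribution behaves as claimed. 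Rearranging gives $I(\bU;\bZ) = I(\bV_{\cA\cup\cR};\bZ\mid \bV_{\cB}) - I(\eee;\bZ\mid \bU)$.

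Next I would bound the two terms on the right-hand side in opposite directions. For the first term, apply Lemma~\ref{lem-LB} with $\cD = \cA \cup \cR$ (noting $\bV_{\cD^c} = \bV_{\cB} = \zero$, so conditioning on $\bV_{\cB}$ is the same as the conditioning in the lemma since the frozen bits are deterministic); this yields $I(\bV_{\cA\cup\cR};\bZ\mid\bV_{\cB}) \ge \sum_{i\in\cA}C_i + \sum_{j\in\cR}C_j$. For the second term, I would use the trivial upper bound $I(\eee;\bZ\mid\bU) \le H(\eee\mid\bU) = H(\eee) = r$, since $\eee$ consists of $r$ independent uniform bits and is independent of $\bU$. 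Combining these, $I(\bU;\bZ) \ge \sum_{i\in\cA}C_i + \sum_{j\in\cR}C_j - r$. Since mutual information is nonnegative, we may take the maximum with $0$, which gives exactly \eqref{LowerboundMIS}.

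The main subtlety — and the step I would be most careful about — is justifying that conditioning on the deterministic frozen vector $\bV_{\cB}=\zero$ is innocuous, so that the identity $I(\bV_{\cA\cup\cR};\bZ\mid\bV_{\cB}) = I(\bU,\eee;\bZ)$ holds exactly, and that the distribution induced on $\bV$ by the encoder (uniform on $\cA\cup\cR$, fixed on $\cB$) is consistent with the hypotheses under which Lemma~\ref{lem-LB} is stated. Concretely, $\bZ$ is the output of the channel applied to $\bX = \bV\Gm$, and $\bV$ is a deterministic invertible function of $(\bU,\eee)$ once the frozen bits are fixed, so $I(\bU,\eee;\bZ) = I(\bV_{\cA},\bV_{\cR};\bZ)$ and further $= I(\bV_{\cA\cup\cR};\bZ \mid \bV_{\cB}=\zero)$; as long as Lemma~\ref{lem-LB} is stated for an arbitrary product-uniform input distribution on $\bV$ (or can be specialized to the case where $\bV_{\cD^c}$ is deterministic), the bound goes through. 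I do not expect any genuine obstacle beyond bookkeeping: the content is entirely carried by Lemma~\ref{lem-LB}, and the remaining moves are the chain rule, independence of message and randomization bits, and the entropy bound $H(\eee)=r$.
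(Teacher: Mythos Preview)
Your approach is essentially the same as the paper's: chain rule on $\cA\cup\cR$, Lemma~\ref{lem-LB} for the first term, and the trivial entropy bound $H(\eee)=r$ for the second. The one place where the paper is more explicit is exactly the subtlety you flagged: Lemma~\ref{lem-LB} is stated for the polarization transform with \emph{uniform} input $\bV$, so it does not literally apply when $\bV_{\cB}=\zero$ is deterministic. The paper closes this gap not by specializing the lemma, but by invoking the symmetry of the BMS channel $W$ to argue that $I(\bV_{\cA};\bZ\mid\bV_{\cB}=\zero)=I(\bV_{\cA};\bZ\mid\bV_{\cB}=\bv_{\cB})$ for every $\bv_{\cB}$, hence $I(\bU;\bZ)=I(\bV_{\cA};\bZ\mid\bV_{\cB})$ with $\bV_{\cB}$ now uniform i.i.d.; at that point $\bV$ is fully uniform and Lemma~\ref{lem-LB} applies verbatim. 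Adding this one-line symmetry step to your argument makes it complete and identical to the paper's proof.
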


\begin{proof}
First, we have
\begin{equation}
\begin{aligned}
I(\bU; \bZ) & =I(\bV_\mathcal{A}; \bZ | \bV_\mathcal{B} = \mathbf{0}) \\
& =I(\bV_\mathcal{A}; \bZ | \bV_\mathcal{B} = \bv_\mathcal{B}),
\end{aligned}
\end{equation}
for any $\bv_\mathcal{B} \in \{0,1\}^{n-k-r}$, where the equality holds by the symmetry of the channel. Hence, one can assume that $\bv_\mathcal{B}$ is an instance of $\bV_\mathcal{B}$ of i.i.d. uniform binary random variables with the observation given to Eve. In other words, we have
\begin{equation}
I(\bU; \bZ) = I(\bV_\mathcal{A}; \bZ | \bV_\mathcal{B}). 
\label{Lemma10}
\end{equation}
The rest of the proof is by a series of equalities and inequalities as follows. 
\begin{equation}
\begin{aligned}
I(\bU; \bZ) &  \overset{(a)}{=}  
I(\bV_\mathcal{A}; \bZ|\bV_\mathcal{B}) \\
& \overset{(b)}{=}   I(\bV_{\mathcal{A} \cup \mathcal{R} }; \bZ|\mathbf{V_B}) - I(\bV_\mathcal{R}; \bZ|\mathbf{V_{\mathcal{A} \cup \mathcal{B} }}) \\
& \overset{(c)}{\geq} \sum_{i \in {\mathcal{A}}} C_{i}  + \sum_{j \in \mathcal{R}} C_j - I(\bV_\mathcal{R}; \bZ|\mathbf{V_{\mathcal{A} \cup \mathcal{B} }}) \\
& \overset{(d)}{\geq} \sum_{i \in {\mathcal{A}}} C_{i}  + \sum_{j \in \mathcal{R}} C_j -r,
\end{aligned}
\end{equation}
where $(a)$ holds as in \eqref{Lemma10}, $(b)$ holds by the chain rule of mutual information, $(c)$ holds by Lemma 2, and $(d)$ holds since $\bV_\mathcal{R}$ is a binary vector of length $r$. Furthermore, we know that the mutual information $I(\bU; \bZ)$ is a non-negative value. This concludes the proof of the proposition. 
\end{proof}

In the next corollary, we rearrange the terms in the lower and upper bounds in order to unify them and present them in terms of the capacity of the wiretap channel $C$. Note that the sum of capacities of all bit-channels in the polarization transform of length $n$ equals $nC$, i.e., 
\begin{equation}
\label{sum-capacity}
\sum_{i \in {\mathcal{A}}} C_{i} + \sum_{j \in \mathcal{B}} C_j + \sum_{l \in \mathcal{R}} C_l = \sum_{i=1}^n  C_i = nC. 
\end{equation}
This leads to the following corollary which combines the results from Proposition \ref{prop-UB} and Proposition \ref{prop-LB}. 
\begin{corollary}
For the MIS leakage $I(\bU; \bZ)$ we have
\begin{equation}
\label{eq-cor}
n(C - \frac{r}{n}) -  \sum_{i \in \mathcal{B}} C_i \leq I(\bU; \bZ) \leq n(C- \frac{r}{n}) + \sum_{l \in \mathcal{R}} (1-C_l).
\end{equation}
\label{cor-bounds}
\end{corollary}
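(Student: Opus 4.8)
The plan is to derive the corollary purely algebraically from Proposition~\ref{prop-UB} and Proposition~\ref{prop-LB}, using the capacity conservation identity \eqref{sum-capacity}. First I would handle the lower bound: starting from the bound $I(\bU;\bZ) \ge \sum_{i \in \cA} C_i + \sum_{j \in \cR} C_j - r$ in Proposition~\ref{prop-LB} (I may ignore the $\max(\cdot,0)$ since the claimed lower bound $n(C - r/n) - \sum_{i\in\cB} C_i$ is exactly this same expression rewritten, so the nonnegativity truncation only strengthens what we state). Using \eqref{sum-capacity}, I substitute $\sum_{i \in \cA} C_i + \sum_{j \in \cR} C_j = nC - \sum_{i \in \cB} C_i$, which immediately gives $I(\bU;\bZ) \ge nC - \sum_{i \in \cB} C_i - r = n(C - \tfrac{r}{n}) - \sum_{i \in \cB} C_i$, the left-hand inequality.

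Next I would handle the upper bound: starting from $I(\bU;\bZ) \le \sum_{i \in \cR^c} C_i$ in Proposition~\ref{prop-UB} (again the $\min(\cdot,k)$ can be dropped for this purpose). Here $\cR^c = \cA \cup \cB$, so $\sum_{i \in \cR^c} C_i = \sum_{i \in \cA} C_i + \sum_{i \in \cB} C_i = nC - \sum_{l \in \cR} C_l$ by \eqref{sum-capacity}. I then rewrite $nC - \sum_{l \in \cR} C_l = nC - r + \sum_{l \in \cR}(1 - C_l) = n(C - \tfrac{r}{n}) + \sum_{l \in \cR}(1 - C_l)$, which is the right-hand inequality. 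Combining the two displays yields \eqref{eq-cor}.

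I do not anticipate a genuine obstacle here; the corollary is a bookkeeping rearrangement. The only point requiring a word of care is the treatment of the $\max(\cdot,0)$ in Proposition~\ref{prop-LB} and the $\min(\cdot,k)$ in Proposition~\ref{prop-UB}: I would note explicitly that the stated two-sided bound \eqref{eq-cor} follows because $\max(x,0) \ge x$ and $\min(y,k) \le y$, so the clipped bounds imply the unclipped ones used in the derivation. If one wanted the tightest possible statement one could keep the clips, but the clean form in \eqref{eq-cor} is what is asserted, and the short chain of substitutions above establishes it.

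\begin{proof}
Recall from \eqref{sum-capacity} that $\sum_{i \in \cA} C_i + \sum_{i \in \cB} C_i + \sum_{l \in \cR} C_l = nC$. For the lower bound, Proposition~\ref{prop-LB} gives $I(\bU;\bZ) \ge \sum_{i \in \cA} C_i + \sum_{j \in \cR} C_j - r$, and substituting $\sum_{i \in \cA} C_i + \sum_{j \in \cR} C_j = nC - \sum_{i \in \cB} C_i$ yields
\begin{equation}
I(\bU;\bZ) \ge nC - \sum_{i \in \cB} C_i - r = n\Bigl(C - \tfrac{r}{n}\Bigr) - \sum_{i \in \cB} C_i .
\end{equation}
For the upper bound, Proposition~\ref{prop-UB} gives $I(\bU;\bZ) \le \sum_{i \in \cR^c} C_i = \sum_{i \in \cA} C_i + \sum_{i \in \cB} C_i = nC - \sum_{l \in \cR} C_l$, and writing $\sum_{l \in \cR} C_l = r - \sum_{l \in \cR}(1 - C_l)$ gives
\begin{equation}
I(\bU;\bZ) \le nC - r + \sum_{l \in \cR}(1 - C_l) = n\Bigl(C - \tfrac{r}{n}\Bigr) + \sum_{l \in \cR}(1 - C_l) .
\end{equation}
Combining the two displays establishes \eqref{eq-cor}.
\end{proof}
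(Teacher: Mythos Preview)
Your proof is correct and follows exactly the approach the paper indicates: the corollary is obtained by combining the bounds of Proposition~\ref{prop-UB} and Proposition~\ref{prop-LB} with the conservation identity \eqref{sum-capacity}, and your algebraic substitutions carry this out precisely. Your remark on dropping the $\max(\cdot,0)$ and $\min(\cdot,k)$ clips is also appropriate and makes explicit a step the paper leaves implicit.
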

Note that the term $C - \frac{r}{n}$ is the gap to capacity of a polar code of length $n$ associated with  $\mathcal{R}$. There is already an extensive literature on studying this quantity, also referred to as finite-length scaling of polar codes, started by Hassani \textit{et al.} in \cite{hassani}. This is the main reason for representing the bounds as in Corollary \ref{cor-bounds} using which together with the already known results on the finite-length scaling of polar codes we will characterize the scaling behavior of the polar secrecy codes.


\section{Finite-Length Scaling of Polar Secrecy Codes}
\label{sec:four}

First, let us recall the known results on finite-length scaling of polar codes. In general, a widely accepted conjecture is that for a sequence of polar codes guaranteeing a probability of error upper bounded by a fixed value $P_e$ and of increasing length $n$, the gap to capacity $C - \frac{r}{n}$, where $\frac{r}{n}$ is the rate and $C$ is the capacity, scales as $n^{-1/\mu}$, for some $\mu > 2$ referred to as the scaling exponent, i.e., 
\begin{equation}
\label{polar-scaling}
C - \frac{r}{n} = \alpha n^{-1/\mu} + o(n^{-1/\mu}),
\end{equation}
for some constant $\alpha \in \R^+$, where $\mu$ depends on the underlying channel. This problem was first studied by Hassani \textit{et al.} \cite{hassani}. It is shown in \cite{hassani} that there exists $\overline{\mu}$ and $\underline{\mu}$ such that
\begin{equation}
\label{polar-scaling2}
\underline{\alpha} n^{-1/\muu} \leq C - \frac{r}{n} \leq \overline{\alpha} n^{-1/\muo},
\end{equation}
where $\underline{\alpha}, \overline{\alpha} \in \R^+$, for large enough, yet finite, values of $n$. Furthermore, numerical values for $\muu$ and $\muo$ are characterized in \cite{hassani}. 

\subsection{Bounds on the Scaling Exponent}
\label{sec:four-A}

In this section, we consider polar secrecy codes associated with $(\cA,\cR)$ set as follows:
\be{AR-def}
\cR = \cG_n(W,\epsilon),\ \text{and}\ \cA = \cG_n(W^*,\epsilon) \setminus \cR,
\ee
where $\epsilon = P_e/n$ and $P_e$ is a fixed value representing a bound on the probability of error at Bob's side. Note that by the degraded assumption of the wiretap setting, we have $\cG_n(W,\epsilon) \subseteq \cG_n(W^*,\epsilon)$ \cite[Lemma 4.7]{Korada}. Recall that the number of information bits in this scheme is $k = |\cA|$, and the secrecy rate is $R_s = k/n$. 

Let $\muu$ and $\muo$ denote the lower bound and upper bound on the scaling exponent of polar codes over the wiretap channel $W$. Also, let $\underline{\alpha}$ and $\overline{\alpha}$ denote the corresponding constants. Also, let $\muu^*$ and $\muo^*$, $\underline{\alpha}^*$ and $\overline{\alpha}^*$ be defined similarly for the main channel $W^*$. 

Now, utilizing \eq{polar-scaling2} one can express the lower and upper bound on the MIS leakage of the above polar secrecy code, as presented in Corollary \ref{cor-bounds}, as follows: 
\begin{equation}
\label{eq-cor2}
I(\bU; \bZ) \geq \underline{\alpha} n^{1-1/\muu} + o(n^{1-1/\muu}) -  \sum_{i \in \mathcal{B}} C_i,
\end{equation}
and
\begin{equation}
\label{eq-cor3}
I(\bU; \bZ) \leq \overline{\alpha} n^{1-1/\muo} + o(n^{1-1/\muo}) + \sum_{l \in \mathcal{R}} (1-C_l).
\end{equation}
Note that we prefer to use the small-$o$ notation to avoid repeatedly saying the inequalities hold for large enough $n$. 


In the next theorem, we characterize bounds on the finite-length scaling behavior of polar secrecy codes, where scaling here refers to the non-asymptotic behavior of both the gap to the secrecy capacity as well as the MIS leakage. More specifically, it is shown that the gap to the secrecy capacity is upper bounded by $O(n^{-1/\muo^*})$ and the normalized MIS leakage is upper bounded by $O(n^{-1/\muo})$. 

\begin{theorem}
\label{thm-scaling}
The polar secrecy code associated with $(\cA,\cR)$, as specified in \eq{AR-def}, has the following properties:
\begin{enumerate}[(i)]
    \item The gap to secrecy capacity:
    \be{cs-gap}
    C_s - R_s \leq \overline{\alpha}^* n^{-1/\muo^*} + o(n^{-1/\muo^*}).
    \ee
    \item Security condition: 
    \be{sec-cond}
    \frac{I(\bU;\bZ)}{k} \leq \overline{\alpha} R_s^{-1} n^{-1/\muo} + o(n^{-1/\muo}).
    \ee
    \item Relibility condition: Bob's probability of error under successive cancellation decoding is upper bounded by $P_e$. 
\end{enumerate}
\end{theorem}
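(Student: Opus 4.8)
The plan is to prove the three parts of Theorem~\ref{thm-scaling} essentially independently, as each follows from a different ingredient that has been set up in the preceding sections.

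\textbf{Part (iii), reliability.} I would start with this since it is the most direct. By construction in \eq{AR-def}, $\cA \cup \cR = \cG_n(W^*,\epsilon)$ with $\epsilon = P_e/n$. In the polar secrecy encoder (Definition~1) the bits in $\cA \cup \cR$ are the non-frozen positions from Bob's point of view: the message bits and the random bits are both unknown to Bob, while the bits in $\cB$ are frozen to $\mathbf{0}$ and hence known to Bob. So Bob runs successive cancellation decoding on a polar code whose information set is $\cG_n(W^*,\epsilon)$. The standard union bound on the block error probability of SC decoding gives $P_{\text{error}} \le \sum_{i \in \cG_n(W^*,\epsilon)} Z(W^*_i) < |\cG_n(W^*,\epsilon)| \cdot \epsilon \le n \cdot (P_e/n) = P_e$, using $Z(W_i^*) < \epsilon$ for each good bit-channel and $|\cG_n(W^*,\epsilon)| \le n$.

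\textbf{Part (i), gap to secrecy capacity.} Here I would use $R_s = k/n = |\cA|/n$ and $C_s = C(W^*) - C(W)$. Write $C_s - R_s = \big(C^* - \tfrac{|\cA \cup \cR|}{n}\big) + \big(\tfrac{|\cR|}{n} - C\big)$, where $C^* = C(W^*)$ and $C = C(W)$ and I used $|\cA \cup \cR| = |\cA| + |\cR|$ since $\cR \subseteq \cA\cup\cR$ and the two sets $\cA,\cR$ are disjoint. Now $\tfrac{|\cA\cup\cR|}{n}$ is exactly the rate of a polar code for $W^*$ achieving block error probability at most $P_e$ (by part (iii)'s analysis), so by the finite-length scaling bound \eq{polar-scaling2} applied to $W^*$ we get $C^* - \tfrac{|\cA\cup\cR|}{n} \le \overline{\alpha}^* n^{-1/\muo^*} + o(n^{-1/\muo^*})$. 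For the second bracket, $\cR = \cG_n(W,\epsilon)$ is the good set for $W$ at the same threshold, so $\tfrac{|\cR|}{n}$ is a rate for a polar code over $W$ with error probability at most $P_e$, hence $\tfrac{|\cR|}{n} \ge C - \underline{\alpha} n^{-1/\muu}$ by the lower side of \eq{polar-scaling2}; thus $\tfrac{|\cR|}{n} - C \ge -\underline{\alpha} n^{-1/\muu}$, i.e., this term is nonnegative only in the sense that it does not hurt — actually it is $\le 0$ up to lower order. So combining, $C_s - R_s \le \overline{\alpha}^* n^{-1/\muo^*} + o(n^{-1/\muo^*})$, the dominant term coming from the $W^*$ side since $-1/\muo^*$ controls the slower-decaying positive contribution. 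The one subtlety I would be careful about: the second bracket $\tfrac{|\cR|}{n} - C$ could in principle be a small positive quantity, but since $\cR$ is a valid low-error polar code set for $W$, its rate cannot exceed $C$ by more than lower-order terms; in fact the good set has rate at most $C$ asymptotically, so this term is $o(n^{-1/\muo^*})$ or negative, and either way it is absorbed.

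\textbf{Part (ii), security.} This is where Corollary~\ref{cor-bounds} and \eq{eq-cor3} do the work. From the upper bound in Corollary~\ref{cor-bounds}, $I(\bU;\bZ) \le n(C - \tfrac{r}{n}) + \sum_{l \in \cR}(1 - C_l)$ where $C_l$ is the capacity of the $l$-th bit-channel of $W$. Each term $1 - C_l$ for $l \in \cR = \cG_n(W,\epsilon)$ is small: a good bit-channel has $Z(W_l) < \epsilon$, and by the standard inequality $1 - C_l \le Z(W_l)$ (or $I(W_l) \ge 1 - Z(W_l)$), so $\sum_{l\in\cR}(1-C_l) \le |\cR|\,\epsilon \le n\epsilon = P_e$, a constant. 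The term $n(C - \tfrac{r}{n})$ is $n$ times the gap to capacity of the polar code for $W$ with set $\cR$, which by \eq{polar-scaling2} (upper side) is at most $\overline{\alpha} n^{1-1/\muo} + o(n^{1-1/\muo})$ — this is precisely \eq{eq-cor3}. Hence $I(\bU;\bZ) \le \overline{\alpha} n^{1-1/\muo} + o(n^{1-1/\muo}) + P_e$. Dividing by $k = R_s n$ gives $\tfrac{I(\bU;\bZ)}{k} \le \overline{\alpha} R_s^{-1} n^{-1/\muo} + o(n^{-1/\muo})$, with the constant $P_e/k = P_e/(R_s n)$ folded into the $o(n^{-1/\muo})$ term since $1 - 1/\muo < 1$ forces $n^{-1} = o(n^{-1/\muo})$... wait, I should double check: $P_e/(R_s n)$ decays like $n^{-1}$, and $n^{-1} = o(n^{-1/\muo})$ iff $1 > 1/\muo$ iff $\muo > 1$, which holds since $\muo \ge \mu > 2$; so it is indeed lower order. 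The main obstacle across the three parts is really just being precise about which polar code each set $\cA$, $\cR$, $\cA\cup\cR$ represents and invoking \eq{polar-scaling2} for the correct channel with the correct error-probability normalization; once that bookkeeping is done, everything reduces to the bounds already proved. I expect the trickiest bit to be justifying that the gap term $\tfrac{|\cR|}{n} - C$ in part (i) is genuinely absorbed into lower order and does not contribute a positive term of order $n^{-1/\muo^*}$ — this needs the (standard) fact that the good-set rate does not asymptotically overshoot capacity.
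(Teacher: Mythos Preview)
Your proposal is correct and follows essentially the same route as the paper: part (iii) via the union bound on SC decoding over $\cA\cup\cR=\cG_n(W^*,\epsilon)$, part (i) via the decomposition $C_s-R_s = \bigl(C(W^*)-\tfrac{|\cG_n(W^*,\epsilon)|}{n}\bigr)-\bigl(C(W)-\tfrac{|\cG_n(W,\epsilon)|}{n}\bigr)$ together with \eq{polar-scaling2}, and part (ii) via the upper bound in Corollary~\ref{cor-bounds} combined with $1-C_l\le Z_l$ and $\sum_{l\in\cR}Z_l\le P_e$. Your extra care about the sign of the $W$-term in part (i) is warranted; the paper handles it implicitly, and indeed the lower side of \eq{polar-scaling2} for $W$ gives $C(W)-\tfrac{|\cR|}{n}\ge \underline{\alpha}\,n^{-1/\muu}>0$, so that term only helps.
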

\begin{proof}
Note that $C_s = C(W^*) - C(W)$ for degraded wiretap channels \cite{Wyner}, and by \eq{AR-def}, we have
$$
R_s = \frac{|\cA|}{n} = \frac{|\cG_n(W^*,\epsilon)|}{n} -  \frac{|\cG_n(W,\epsilon)|}{n} 
$$
By invoking this together with the result on the scaling exponent of polar codes, expressed in \eq{polar-scaling2}, for $W^*$ we get:
\begin{align} 
\label{Rs-gap1}
C_s - R_s &= C(W^*) - \frac{|\cG_n(W^*,\epsilon)|}{n} - \bigl(C(W) - \frac{|\cG_n(W,\epsilon)|}{n}\bigr)\\
\label{Rs-gap2}
& \leq \overline{\alpha}^* n^{-1/\muo^*} + o(n^{-1/\muo^*}),
\end{align}
which completes the proof for (i).

To show (ii), we first upper bound the term $\sum_{l \in \mathcal{R}} (1-C_l)$ that appears in the upper bound on the MIS leakage in \Cref{cor-bounds}. Note that for any BMS channel with capacity $C_l$ and Bhattacharyya parameter $Z_l$ we have \cite{Arikan}:
\begin{equation}
\label{thm-scaling-eq1}
C_l + Z_l \geq 1.
\end{equation}
Note also that the finite-length scaling results of Hassani et al. is derived by upper bounding the probability of error as the sum of Bhattacharyya parameters of the good bit-channel. More specifically, the $r$ good bit-channels for transmitting the information bits satisfy the following for the target upper bound on the probability of error $P_e$:
\begin{equation}
\label{thm-scaling-eq2}
\sum_{l \in \mathcal{R}} Z_l \leq P_e.
\end{equation}
Hence, combining \eqref{eq-cor3} together with \eqref{thm-scaling-eq1} and \eqref{thm-scaling-eq2} we arrive at the following upper bound on the MIS leakage:
\begin{equation}
\begin{aligned}
\label{thm-scaling-eq3}
I(\bU; \bZ) & \leq \overline{\alpha} n^{1-1/\muo} + o(n^{1-1/\muo}) + \sum_{l \in \mathcal{R}} Z_l\\
& \leq \overline{\alpha} n^{1-1/\muo} + o(n^{1-1/\muo}) + P_e\\
&= \overline{\alpha} n^{1-1/\muo} + o(n^{1-1/\muo}). 
\end{aligned}
\end{equation}
This together by noting that $R_s = k/n$ completes the proof of (ii). And, finally, (iii) is by the original result on the probability of error of polar codes under the successive cancellation (SC) decoding \cite{Arikan}. In fact, Bob treats the entire $\cA \cup \cR$ as the set of information bits and invokes SC decoder to decode them. The decoded bits on $\cR$ are simply discarded and the decoded bits on $\cA$ are output as the decoded message. 
\end{proof}

\noindent{\textbf{Remark 1.}} The discussions on this section are based upon polar coding with Ar{\i}kan's $2 \times 2$ kernel. There is an extensive research on enhancing the performance of polar codes, including improving their scaling exponent, by utilizing larger $l \times l$ kernels. In particular, it is shown in \cite{fazeli2020binary} that by increasing $l$ one can approach the optimal scaling exponent of $2$. Such results can be incorporated here in a straightforward fashion. More specifically, by invoking the results in \cite{fazeli2020binary}, and by building upon arbitrarily large kernels, one can obtain polar secrecy codes with large kernels whose gap to the secrecy capacity is upper bounded by $O(n^{-\shalf+\epsilon})$ and whose normalized MIS leakage is upper bounded by $O(n^{-\shalf+\epsilon})$, for any fixed $\epsilon >0$. 

\noindent{\textbf{Remark 2.}} Non-asymptotic fundamental limits for wiretap channels are characterized in \cite{Yang}. However, \cite{Yang} considers a non-asymptotic regime where the secrecy leakage, though in terms of the total variation distance (TVD), is treated as a small constant, same as the reliability measure $P_e$. However, the polar secrecy codes in this section operate in a regime where the leakage also approaches zero polynomially in $n$. We are not aware if this regime is addressed in the literature. However, one can naturally expect that there is a sequence of random secrecy codes, together with a converse result stating they are optimal, with the gap to secrecy capacity as well as the normalized MIS leakage of $\Theta(n^{-\shalf})$.

\subsection{Operating above the Secrecy Capacity}

Next, we explore an interesting situation where it is possible to operate slightly above secrecy capacity, with a gap polynomial in $n$ and, of course, vanishing as $n$ grows large while the weak secrecy condition is still satisfied. Note that, in general, one can operate at rates above secrecy capacity at the expense of having a non-zero normalized MIS leakage. In fact, the entire \textit{rate-equivocation region} for degraded wiretap channels is characterized by Wyner \cite{Wyner}. However, to the best of authors' knowledge, how this region exactly scales when operating slightly above (i.e., with a gap vanishing in $n$) the secrecy capacity is not well understood. 


Next, we discuss how the polar secrecy code associated with $(\cA,\cR)$, specified in \eq{AR-def}, can be modified to operate above the secrecy capacity while the weak secrecy condition is still satisfied. Let $\delta > \muo^*$ be fixed. For instance any $\delta > 4.714$ would always work \cite{mondelli2016unified}. Then we pick an arbitrary subset $\cR' \subset \cR$ with $|\cR'| = n^{1-\delta}$ and exclude it from $\cR$. More specifically, we set the new $\cR$ and $\cA$ as follows:
\be{AR-def2}
\cR = \cG_n(W,\epsilon) \setminus \cR' ,\ \text{and}\ \cA = \cG_n(W^*,\epsilon) \setminus \cR,
\ee

\begin{theorem}
\label{thm-scaling2}
The polar secrecy code associated with $(\cA,\cR)$, as specified in \eq{AR-def2}, has the following properties:
\begin{enumerate}[(i)]
    \item The gap to secrecy capacity:
    \be{cs-gap2}
    R_s - C_s  = n^{-1/\delta} + o(n^{-1/\delta}).
    \ee
    \item Security condition: 
    \be{sec-cond2}
    \frac{I(\bU;\bZ)}{k} \leq R_s^{-1} n^{-1/\delta} + o(n^{-1/\delta}). 
    \ee
    \item Relibility condition: Bob's probability of error under successive cancellation decoding is upper bounded by $P_e$. 
\end{enumerate}
\end{theorem}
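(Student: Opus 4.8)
The plan is to read \eq{AR-def2} as a controlled perturbation of \eq{AR-def} and simply track how each of the three claimed quantities moves when the block $\cR'$ (of size $\lceil n^{1-1/\delta}\rceil$) is transferred from the random-bit set to the information-bit set. Write $\cA_0,\cR_0$ for the sets of \eq{AR-def} and $\cA,\cR$ for those of \eq{AR-def2}. Since $\cR'\subseteq\cR_0=\cG_n(W,\epsilon)\subseteq\cG_n(W^*,\epsilon)$, one checks directly that $\cR=\cR_0\setminus\cR'$, $\cA=\cA_0\cup\cR'$ (a disjoint union), and -- crucially -- $\cA\cup\cR=\cA_0\cup\cR_0=\cG_n(W^*,\epsilon)$ is unchanged, while $|\cA|$ grows by $|\cR'|$ and $|\cR|=|\cG_n(W,\epsilon)|-|\cR'|$. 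The last fact disposes of (iii) for free: Bob's successive cancellation decoder is verbatim the one in \Tref{thm-scaling}(iii), running on the information set $\cG_n(W^*,\epsilon)$ and then discarding the coordinates in $\cR$, so a union bound gives block error at most $|\cG_n(W^*,\epsilon)|\,\epsilon\le n\cdot(P_e/n)=P_e$.

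For (i), I would write $R_s-C_s=(R_s^{(0)}-C_s)+|\cR'|/n$, where $R_s^{(0)}=|\cA_0|/n$ is the secrecy rate of \eq{AR-def}; here $|\cR'|/n=n^{-1/\delta}+o(n^{-1/\delta})$ (the integer rounding contributing only $O(1/n)=o(n^{-1/\delta})$). It then remains to show $R_s^{(0)}-C_s=o(n^{-1/\delta})$. But $R_s^{(0)}-C_s$ equals the difference of gaps to capacity $\bigl(C(W)-|\cG_n(W,\epsilon)|/n\bigr)-\bigl(C(W^*)-|\cG_n(W^*,\epsilon)|/n\bigr)$ (cf.\ \eq{Rs-gap1}), each of which lies in $(0,\overline{\alpha}\,n^{-1/\muo}]$ resp.\ $(0,\overline{\alpha}^*n^{-1/\muo^*}]$ by \eq{polar-scaling2}; hence $|R_s^{(0)}-C_s|=O(n^{-1/\muo}+n^{-1/\muo^*})$, which is $o(n^{-1/\delta})$ because $\delta$ exceeds both $\muo$ and $\muo^*$ (this is the role of $\delta>\muo^*$, and, to be safe for all channels, $\delta>4.714$ via the universal bound of \cite{mondelli2016unified}). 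This yields \eq{cs-gap2}.

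For (ii), I would start from the upper bound of \Cref{cor-bounds} with $r=|\cR|=|\cG_n(W,\epsilon)|-|\cR'|$ (and $C=C(W)$):
\[
I(\bU;\bZ)\le n\Bigl(C-\frac{r}{n}\Bigr)+\sum_{l\in\cR}(1-C_l)=\bigl(nC-|\cG_n(W,\epsilon)|\bigr)+|\cR'|+\sum_{l\in\cR}(1-C_l).
\]
The first summand is $n$ times the gap to capacity of the polar code for $W$ (whose error is at most $P_e$ since $\epsilon=P_e/n$), hence at most $\overline{\alpha}\,n^{1-1/\muo}$ by \eq{polar-scaling2}; the second is $\lceil n^{1-1/\delta}\rceil=n^{1-1/\delta}+o(n^{1-1/\delta})$; and for the third, \eq{thm-scaling-eq1} gives $1-C_l\le Z_l$ while $\cR\subseteq\cG_n(W,\epsilon)$ gives $Z_l<\epsilon=P_e/n$, so $\sum_{l\in\cR}(1-C_l)<|\cR|\,P_e/n\le P_e$ (this is \eq{thm-scaling-eq2} restricted to a subset of $\cG_n(W,\epsilon)$). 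Since $\muo<\delta$ absorbs the first term into $o(n^{1-1/\delta})$ and $P_e$ is a constant with $1-1/\delta>0$, we get $I(\bU;\bZ)\le n^{1-1/\delta}+o(n^{1-1/\delta})$. Dividing by $k=nR_s$ and using $R_s\to C_s>0$ (so $R_s^{-1}$ is bounded and $R_s^{-1}\,o(n^{1-1/\delta})=o(n^{1-1/\delta})$) gives \eq{sec-cond2}.

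I do not expect a genuine obstacle here: once one observes that \eq{AR-def2} leaves $\cA\cup\cR$ untouched, the argument is exactly the bookkeeping of \Tref{thm-scaling} plus the single extra term $|\cR'|$. The points that actually need attention are (a) that $\delta$ must dominate \emph{both} scaling exponents $\muo$ and $\muo^*$, which is why the statement is phrased with the universal bound in mind; (b) the standing non-degeneracy assumptions $C(W)>0$ (so that, for $n$ large, $\cG_n(W,\epsilon)$ does contain a subset $\cR'$ of the prescribed size, the rounding being negligible) and $C_s>0$ (so $R_s$ stays bounded away from $0$ in (ii)); and (c) checking that \eq{polar-scaling2}, \eq{thm-scaling-eq1}, and \eq{thm-scaling-eq2} still apply to the modified index sets, which they do since $\cR\subseteq\cG_n(W,\epsilon)$ and $\cA\cup\cR=\cG_n(W^*,\epsilon)$.
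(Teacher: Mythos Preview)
Your proposal is correct and follows essentially the same approach as the paper: both arguments reduce (i) to the two gap-to-capacity terms via \eq{Rs-gap1}, handle (ii) by plugging $r=|\cG_n(W,\epsilon)|-|\cR'|$ into \Cref{cor-bounds} and bounding $\sum_{l\in\cR}(1-C_l)\le P_e$ exactly as in \eq{thm-scaling-eq1}--\eq{thm-scaling-eq2}, and dispatch (iii) by noting $\cA\cup\cR=\cG_n(W^*,\epsilon)$ is unchanged. Your write-up is in fact more careful than the paper's on two points: you make explicit that $\delta$ must exceed \emph{both} $\muo$ and $\muo^*$ (the paper's hypothesis $\delta>\muo^*$ alone is not enough unless one invokes the universal bound), and you flag the non-degeneracy assumptions $C(W)>0$ and $C_s>0$ needed for $\cR'$ to exist and for $R_s^{-1}$ to stay bounded.
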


\begin{proof}
To show (i), we have 
\begin{align*} 
R_s - C_s =\ & n^{-1/\delta} + C(W) - \frac{|\cG_n(W,\epsilon)|}{n}\\ &- \bigl(C(W^*)
- \frac{|\cG_n(W^*,\epsilon)|}{n}\bigr)\\
=\ & n^{-1/\delta} + O(n^{-1/\muo^*}) + O(n^{-1/\muo}) \\
=\ & n^{-1/\delta} + o(n^{-1/\delta}). 
\end{align*}
which completes the proof for (i).

To show (ii), by \Cref{cor-bounds} we have 
$$
I(\bU; \bZ) \leq n(C(W)- \frac{|\cR|}{n}) + \sum_{l \in \mathcal{R}} (1-C_l)
$$
Note that the dominating term in $C(W)- \frac{|\cR|}{n}$ is $n^{-1/\delta}$. Also, same is as in the proof of \Tref{thm-scaling}, we have $\sum_{l \in \mathcal{R}} (1-C_l) < P_e$. This completes the proof of (ii). Finally, the argument for part (iii) is exactly same as that of the part (iii) of \Tref{thm-scaling}. 
\end{proof}

\section{Numerical Results}
\label{sec:five}

Most existing studies on information-theoretic security, such as wiretap channels and secret key generation, primarily focus on asymptotic results. This involves establishing fundamental limits on secrecy rates or devising schemes to approach these limits in the asymptotic sense. However, to validate the practicality of these schemes, it is also critical to present numerical results regarding security guarantees of these schemes. This approach mirrors how performance curves, showing decoder error rates, are frequently showcased and compared to ensure the reliability guarantees of channel coding techniques. There are only a few prior work that have shown finite-length secrecy performance of codes, in terms of MIS leakage or other security metrics. Some numerical results for the polar secrecy codes are presented in \cite{MahdavifarSecrecy} for large block lengths. More recently, \cite{Herfeh} considers the special case where main channel is noiseless and the wiretap channel is BEC and present numerical bounds on the TVD secrecy measure of certain polar and Reed-Muller secrecy codes. In another related work, numerical results on the secrecy performance of randomized convolutional codes for the wiretap channel are presented in \cite{Nooraiepour2017}.

\begin{figure}[t]
\centering
\includegraphics[width=0.99\linewidth]{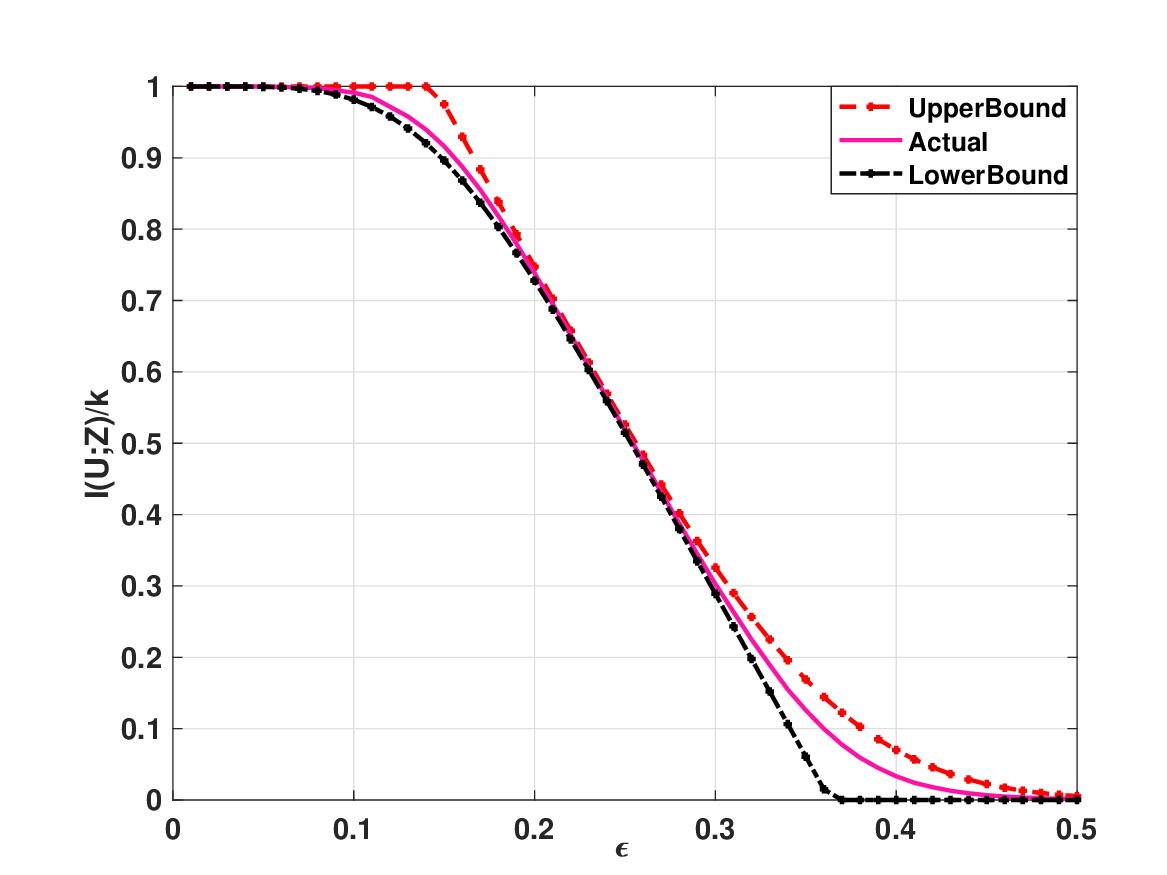}
\caption{Upper and lower Bounds versus the Monte-Carlo Approximation of the normalized MIS leakage over wiretap BEC$(\epsilon)$.}

\label{fig:256plot}
\end{figure}

The upper and lower bounds presented in this paper, summarized in \Cref{cor-bounds}, not only enables us to characterize the finite-length scaling of polar secrecy codes, as discussed in Section\,\ref{sec:four}, but also to numerically bound the actual MIS leakage in practical settings. Note that directly computing the MIS leakage $I(\bU; \bZ)$ is in general infeasible even for short block lengths $n$. However, the bit-channels' capacities $C_i$'s can be efficiently computed for BECs and can be also well-approximated for general BMS channels \cite{tal2013construct} leading to efficient numerical methods to approximate both the lower and the upper bounds on the MIS leakage.

Furthermore, the actual MIS leakage $I(\bU;\bZ)$ can be approximated using a Monte Carlo method when wiretap channel is a BEC. First, let $\tilde{G}_{(k+r)\times n}$ denote the generator matrix for a polar code associated with $\cA \cup \cR$, where $\cA$ and $\cR$ are the subsets of indices for the information bits and random bits in the considered polar secrecy code. Let $\cA',\cR' \subseteq [k+r]$ denote the indices of information bits and random bits, after discarding the frozen bits. For $i \in \cA'$, let $\bu_i$ denote the indicator vector corresponding to index $i$, i.e., its $i$-th entry is $1$ and the rest of entries are zeros. Let also $\cU_{\cA'}$ denote the subspace of $\F_2^{k+r}$ spanned by $\bu_i$'s, for $i\in \cA'$.  

The Monte Carlo-based approximation of the MIS leakage for wiretap BECs is as follows. One can generate random instances of the erasure patterns for the wiretap channel repeatedly. For each such an instance, let $\tilde{\cG}'$ denote the subspace spanned by the columns of $\tilde{G}$ indexed by non-erasures. Then it is straightforward to observe that $\text{dim}(\tilde{\cG}' \cap \cU_{\cA'})$ is the corresponding \textit{instance} of the MIS leakage. One can average this quantity across a large number of random erasure patterns in order to obtain an approximation for the MIS leakage. 

In Figure\,\ref{fig:256plot} we present numerical results for a polar secrecy code of length $n=256$ with $k=56$, and $r=163$ over wiretap BECs for a range of erasure parameters for the wiretap channel $W$. Both the lower bound of Proposition\,\ref{prop-LB} as well as the upper bound of Proposition\,\ref{prop-UB} on the normalized MIS leakage $I(\bU;\bZ)/k$ are demonstrated together with the approximation of the actual values obtained via the method explained above. The results show that the bounds are actually tight especially for \textit{medium} ranges of the MIS leakage. 

\section*{Acknowledgment}
We would like to thank Hamed Hassani for helpful discussions.

\newpage

\newpage

\bibliographystyle{IEEEtran}
\bibliography{mybib,ref}

\end{document}